\theoremstyle{definition}
\newtheorem{theorem}{Theorem}
\newtheorem{lemma}[theorem]{Lemma}
\newtheorem{definition}{Definition}
\begin{document}

	\title
	{Optimal quantizer structure for binary discrete input continuous output channels under arbitrary quantized-output constraints}

	\author{\IEEEauthorblockN{Thuan Nguyen}
		\IEEEauthorblockA{School of Electrical and\\Computer Engineering\\
			Oregon State University\\
			Corvallis, OR, 97331\\
			Email: nguyeth9@oregonstate.edu}
		\and
		\IEEEauthorblockN{Thinh Nguyen}
		\IEEEauthorblockA{School of Electrical and\\Computer Engineering\\
			Oregon State University\\
			Corvallis, 97331 \\
			Email: thinhq@eecs.oregonstate.edu}
	}

	\maketitle
	\begin{abstract}
Given a channel having binary input $X=(x_1,x_2)$ having the probability distribution $p_X=(p_{x_1},p_{x_2})$ that is corrupted by a continuous noise to produce a continuous output $y \in Y=\mathbf{R}$. For a given conditional distribution $p_{y|x_1}=\phi_1(y)$ and $p_{y|x_2}=\phi_2(y)$, one wants to quantize the continuous output $y$ back to the final discrete output $Z=(z_1,z_2,\dots,z_N)$ such that the mutual information between input and quantized-output $I(X;Z)$ is maximized while the probability of the quantized-output $p_Z=(p_{z_1},p_{z_2},\dots,p_{z_N})$ has to satisfy a certain constraint. Consider a new variable $r_y=\dfrac{p_{x_1} \phi_1(y)}{p_{x_1} \phi_1(y) + p_{x_2} \phi_2(y)}$, we show that the optimal quantizer has a structure of convex cells in the new variable $r_y$. Based on the convex cells property, a fast algorithm is proposed to find the global optimal quantizer in a polynomial time complexity. In additional, if the quantized-output is binary ($N=2$), we show a sufficient condition such that the single threshold quantizer is optimal. 
\end{abstract}
	Keyword: quantization, mutual information, constraints.

\section{Introduction}
Motivated by many applications in designing of the communication decoder i.e., polar code decoder \cite{tal2011construct} and LDPC code decoder \cite{romero2015decoding},  designing the optimal quantizer that maximizes the mutual information between input and quantized-output recently has received much attention from both information theory and communication theory society.  Over a past decade, many algorithms was proposed  \cite{kurkoski2014quantization}, \cite{zhang2016low}, \cite{iwata2014quantizer}, \cite{mathar2013threshold}, \cite{sakai2014suboptimal},  \cite{koch2013low}, \cite{kurkoski2017single}, \cite{nguyen2018capacities}, \cite{he2019dynamic}. Due to the non-linearity of quantization/partition problem, finding the global optimal quantizer is an extremely hard problem \cite{mumey2003optimal}. Therefore, most of the algorithms only can find the local optimal or  the near global optimal  quantizer \cite{zhang2016low}, \cite{mathar2013threshold}, \cite{sakai2014suboptimal},  \cite{koch2013low}, \cite{nguyen2018capacities}. However, it is well-known that if the channel input is binary, then the optimal quantizer has a structure of convex cells in the space of posterior distribution and
 the global optimal quantizer can be found efficiently in a polynomial time by using dynamic programming technique \cite{kurkoski2014quantization}. In \cite{iwata2014quantizer} and \cite{he2019dynamic}, the time complexity can be further reduced to a linear time complexity using the famous SMAWK algorithm. 

While many of works were dedicated to finding the optimal quantizer
 that maximizes the mutual information between input and quantized-output, the problem of finding the optimal quantizer under the quantized-output constrained received  much less attention. It is worth noting that finding the optimal quantizer under the quantized-output constraints having a long history. For example, the problem of entropy-constrained scalar quantization \cite{Marco2004PerformanceOL}, \cite{Gyorgy2001OnTS} and entropy-constrained vector quantization \cite{Chou1989EntropyconstrainedVQ}, \cite{Gersho1991VectorQA}, \cite{Zhao2008OnEV} were established a long time ago that aimed to minimize a specified distortion i.e., the square error distortion between the input and the quantized-output  while the entropy of the quantized-output satisfies a constraint. The constrained-entropy quantization is very important in the sense of limited communication channels. For example, one wants to quantize/compress the data to an intermediate quantized-output before transmits this quantized-output to a destination over a limited rate communication channel, then the entropy of quantized-output that denotes the lowest compression rate, is very important. Entropy-constrained can be replaced by many different output constraints i.e., power consumption constraint or time delay constraint to construct other interesting problems.  That said, the problem of quantization that maximizing mutual information under quantized-output constrained is an interesting problem and can be applied in many scenarios.
While the problem of quantization that maximizes the mutual information under quantized-output constrained is promising, there is a little of literature about this problem.  In \cite{strouse2017deterministic},  Strouse et al. proposed 
an iteration algorithm to find the local optimal quantizer that maximizing the mutual information under the entropy-constrained of quantized-output. In \cite{nguyen2019minimizing}, the authors generalized the results in \cite{strouse2017deterministic} to find the local optimal quantizer that minimizes an arbitrary impurity function while the quantized-output constraint is an arbitrary concave function. However, as the best of our knowledge, there is no work that can determine the globally optimal quantizer that maximizes the mutual information between input and quantized-output under an arbitrary quantized-output constrained even for the binary input channels. 

In this paper, we firstly show that if the channel is binary input continuous-output then for a given quantizer, there exists another \textit{convex cell quantizer} having the same quantized-output probability but produces a strictly higher or equal of mutual information between input and quantized-output. The \textit{convex cell quantizer} is a quantizer such that each quantized-output is an interval cell in space of posterior distribution. That said, to find the globally optimal quantizer, we only need to search over all the \textit{convex cell quantizers}.  Secondly, under a mild condition of quantized-output constraint, we propose a polynomial time complexity algorithm that can find the globally optimal quantizer. Finally, we characterize a sufficient condition such that a single threshold quantizer is optimal.

\section{Problem Formulation}
	\label{sec: formulation}
		\begin{figure}
		\centering
		\includegraphics[width=2.7 in]{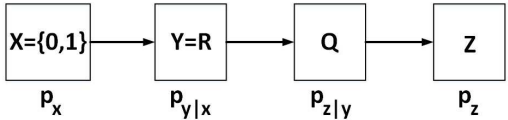}\\
		\caption{Quantization of a binary discrete input continuous output channel that maximizes the mutual information $I(X;Z)$ while the quantized-output has to satisfy a constraint $C(P_Z) \leq D$.}\label{fig: 1}
	\end{figure}		
	
Fig. \ref{fig: 1} illustrates our setting.  The discrete binary input  $X=(x_1,x_2)$  with a given pmf $p_X=\{p_{x_1},p_{x_2}\}$ is transmitted over a noisy channel. Due to the continuous noise, the output $y \in Y=\mathbf{R}$ is a continuous signal that is specified by two given conditional distributions $p_{y|x_1}=\phi_1(y)$ and $p_{y|x_2}=\phi_2(y)$. One uses a quantizer $Q$ to quantize the continuous output $y \in R$ back to the final discrete output $Z=(z_1,z_2,\dots,z_N)$ such that the the mutual information between input and quantized-output $I(X;Z)$ is maximized while the distribution of the quantized-output $p_Z=\{\ p_{z_1}, p_{z_2},\dots,p_{z_N} \}$ has to satisfy a constraint
\begin{equation}
C(p_Z)=C(p_{Z_1},p_{Z_2},\dots,p_{Z_N}) \leq D,
\end{equation}
where $C(.)$ is an arbitrary function and $D$ is a predetermined positive constant. Obviously that both $I(X;Z)$ and $p_Z$ depend on the quantizer, then we are interested in solving the following optimization problem:
	\begin{equation}
		\label{eq: jointly optimization} 
		\max_{Q} \beta I(X;Z) - C(p_Z),
	\end{equation}
where $\beta$ is pre-specified parameter to control the trade-off between maximizing $I(X;Z)$ and minimizing $C(p_Z)$.

\section{Preliminaries}
	\label{sec: review}

\subsection{Notations and definitions}

For convenience, we use the following notations and definitions:
\begin{enumerate}
	\item $r_y=p_{x_1|y}$ denotes the conditional distribution of $x_1|y$. For a given conditional distribution $\phi_1(y)=p_{y|x_1}$ and $\phi_2(y)=p_{y|x_2}$ then $r_y=\dfrac{p_1 \phi_1(y)}{p_1 \phi_1(y) + p_2 \phi_2(y)}.$
	\item $v_y=p_{x|y}=[p_{x_1|y},p_{x_2|y}]$ denotes the conditional distribution vector of $x|y$. Then $v_y=[r_y, 1-r_y]$. 
	
	\item $\mu (y)$ denotes the density distribution of variable $y$. $\mu(y)=p_1 \phi_1(y) + p_2 \phi_2(y)$. 
\end{enumerate}



\begin{definition}
\label{def: 1}
{\bf Convex cell quantizer.} 
A \textit{convex cell quantizer} is a quantizer such that each output $z_i \in Z$ is quantized by an interval cells in $r_y$ using $N+1$  thresholds $h=\{h_0=0 < h_1 <\dots, h_{N-1} < h_N=1 \}$ such that
\begin{equation}
Q(y)=z_i, \text{    if    }  h_{i-1} \leq r_y < h_i.
\end{equation}
\end{definition}

\begin{definition}
\label{def: 2}
{\bf Kullback-Leibler Divergence.} KL divergence of two probability vectors $a = (a_1, a_2, \dots, a_J)$ and $b = (b_1, b_2, \dots, b_J)$ of the same outcome set $R^J$ is defined by
\begin{equation}
\label{eq: KL divergence}
D(a||b)=\sum_{i=1}^{J}a_i \log (\dfrac{a_i}{b_i}).
\end{equation}
\end{definition}

\begin{definition}
\label{def: 3}
{\bf Centroid.} Centroid of subset $z_i$ is $ \textbf{c}_i $ which is defined by two dimensional vector $[c_i,1-c_i]$ that globally minimizes the total KL divergence $v_y$  to $\textbf{c}_i$ from all $y \in z_i$  
\begin{equation}
\textbf{c}_i=\min_{c} \int_{y \in z_i}^{} D(v_y|| c) d \mu(y). 
\end{equation}
\end{definition}

\begin{definition}
\textbf{Distortion measurement. } 
Consider a quantizer $Q$ that produces the quantized-output subsets $(z_1,z_2,\dots,z_N)$, the total distortion of $Q$ is denoted by $D(Q)$ which can be constructed by
\begin{equation}
D(Q)=\sum_{i=1}^{K} \int_{y \in z_i}^{} D(v_y||\textbf{c}_i)d \mu(y),
\end{equation}
where $\textbf{c}_i$ is the centroid of $z_i$. 
\end{definition}

\begin{definition}
\textbf{Vector order.}
Consider 2 binary vectors $v_{y_1}$ and $v_{y_2}$, we define $v_{y_1} \leq v_{y_2}$ if and only if $p_{x_1|y_1} \leq p_{x_1|y_2}$ or $r_{y_1} \leq r_{y_2}$.
\end{definition}

\begin{definition}
\textbf{Set order.}
Consider two arbitrary sets $A$ and $B$, we define $A \leq B$ if and only if for $\forall$ $y_a \in A$ and $y_b \in B$ then $v_{y_a} \leq v_{y_b}$. On the other hand, we define $A \equiv B$ if an only if $A \subset B$ and  $B \subset A$. 
\end{definition}
For example, if $z_1 \leq z_2$ then for $\forall$ $y_1 \in z_1$  and $\forall$ $y_2 \in z_2$, we have $p_{x_1|y_1}=r_{y_1} < r_{y_2} =p_{x_1|y_2}$. 

\subsection{Optimal quantizer that maximizing the mutual information is equivalent to optimal Kullback Leibler divergence distance clustering}
Interestingly, one can show that finding the optimal quantizer $Q^*$ that maximizes the mutual information $I(X;Z)$ is equivalent to determine the optimal clustering that minimizes the distortion using KL divergence as the distance metric.  The idea and proof were  already established in \cite{zhang2016low}, however, we rewrite the proof using our notation for convenient.  
For a given $y$ and a given quantizer that produces $z_i = Q(y)$ having the centroid $\textbf{c}_i$, the KL-divergence between  the conditional pmfs $v_y$  and $\textbf{c}_i$ is denoted as $D(v_y||\textbf{c}_i)$.  If the expectation is taken over $Y=\mathbf{R}$, then from Lemma 1 in \cite{zhang2016low}, we have:
\begin{eqnarray*}
\mathbb{E}_Y[D(v_y|| \textbf{c}_i )]=I(X;Y)-I(X;Z).
\end{eqnarray*}
Since $p_X$ and $\phi_i(y)$, $i=1,2$ are given, $I(X;Y)$ is given and independent of the quantizer $Q$. Thus, maximizing $I(X;Z)$ over $Q$ is equivalent to minimizing $\mathbb{E}_Y[D(v_y||\textbf{c}_i)]$ with optimal quantizer:
\begin{equation*}
\label{eq:optimal_quantizer}
Q^*=\min_{Q}  \mathbb{E}_Y[D(v_y||\textbf{c}_i)]=\min_{Q} \sum_{i=1}^{K} \int_{y \in z_i}^{} D(v_y||\textbf{c}_i)d \mu(y).
\end{equation*}

Thus, we are interested in \textit{finding the optimal quantizer $Q^*$ that minimizes the KL divergence distortion while the quantized-output satisfies a certain constraint}. 

\section{Optimal quantizer's structure}
In this section, we show that an arbitrary quantizer always can be replaced by a \textit{convex cell quantizer} with the same quantized-output while the distortion is strictly less than or equal. That said, to find the globally optimal quantizer in (\ref{eq: jointly optimization}), we only need to search over all the \textit{convex cell quantizers}. Noting that we can assume that $\textbf{c}_i \neq \textbf{c}_j$ for $i \neq j$. The reason is that if $\textbf{c}_i = \textbf{c}_j$, then one can merge $z_i$ and $z_j$ into a single subset without changing the distortion $D(Q)$. 
\subsection{Optimal structure of binary quantized-output quantizers}
We begin with the most simple scenario where the input and the quantized-output are binary. We show that for any arbitrary quantizer, existing a \textit{convex cell quantizer} having the same quantized-output distribution, however, the total distortion is strictly smaller or equal. The result is stated as follows. 
\begin{theorem}
\label{theorem: 1}
Let $Q$ is a quantizer with arbitrary two disjoint quantized-output sets $\{ z_1,z_2 \}$ corresponding to two centroids ${\textbf{c}_1,\textbf{c}_2}$ such that $\textbf{c}_1 < \textbf{c}_2$, there exists a \textit{convex cell quantizer} $\bar{Q}$ with the interval cells $\{\bar{z_1},\bar{z_2}\}$ and the corresponding centroids $\{ \bar{\textbf{c}_1},\bar{\textbf{c}_2}\}$ such that  $\{\bar{z_1} \leq \bar{z_2}\}$, $p_{z_i}=p_{\bar{z_i}}$ for $i=1,2$ and $D(\bar{Q}) \leq D(Q)$.
\end{theorem}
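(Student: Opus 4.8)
The plan is to reduce the two-dimensional distortion comparison to a one-dimensional rearrangement inequality by exploiting the Bregman structure of the KL divergence. First I would write, for $v_y=[r_y,1-r_y]$ and any centroid $[c,1-c]$,
\begin{equation*}
D(v_y\|[c,1-c]) = -H(r_y) - \big[\,r_y\log c + (1-r_y)\log(1-c)\,\big],
\end{equation*}
where $H(r_y)=-r_y\log r_y-(1-r_y)\log(1-r_y)$. Since the binary quantizers satisfy $z_1\cup z_2=\bar{z_1}\cup\bar{z_2}=Y$, the term $\int_Y -H(r_y)\,d\mu(y)$ is \emph{identical} for $Q$ and $\bar Q$. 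Hence minimizing the distortion is equivalent to maximizing the reward $\sum_i\int_{z_i} g_i(y)\,d\mu(y)$, where $g_i(y)=r_y\log c_i+(1-r_y)\log(1-c_i)$ and $c_i$ is the first component of $\textbf{c}_i$.

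Next I would construct $\bar Q$. Set $\bar{z_1}=\{y:r_y<h\}$ and $\bar{z_2}=\{y:r_y\ge h\}$, choosing the threshold $h\in[0,1]$ so that $\int_{\{r_y<h\}}d\mu(y)=p_{z_1}$; such an $h$ exists because $h\mapsto\int_{\{r_y<h\}}d\mu(y)$ rises continuously from $0$ to $1$. By construction $\bar{z_1}\le\bar{z_2}$ and $p_{\bar{z_i}}=p_{z_i}$. Let $\bar{\textbf{c}_1},\bar{\textbf{c}_2}$ be the centroids of $\bar{z_1},\bar{z_2}$, pairing $\bar{z_1}$ (the low-$r_y$ cell) with the low centroid $c_1$ and $\bar{z_2}$ with $c_2$.

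The core step is the reward inequality, comparing the convex assignment scored with the \emph{old} centroids $c_1,c_2$ against $Q$. Partitioning into $A=z_1\cap\bar{z_1}$, $B=z_1\cap\bar{z_2}$, $C=z_2\cap\bar{z_1}$, $E=z_2\cap\bar{z_2}$, the reward difference collapses to $\int_B(g_2-g_1)\,d\mu-\int_C(g_2-g_1)\,d\mu$. Writing $g_2-g_1=\ell(r_y)$, a direct computation shows $\ell$ is affine in $r_y$ with slope $\log\frac{c_2(1-c_1)}{c_1(1-c_2)}>0$, which is positive precisely because $c_1<c_2$. The probability constraint forces $\mu(B)=\mu(C)$ (as $p_{z_1}=\mu(A)+\mu(B)=\mu(A)+\mu(C)=p_{\bar{z_1}}$), which cancels the constant term of $\ell$, leaving a quantity proportional to $\int_B r_y\,d\mu-\int_C r_y\,d\mu$. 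Since $r_y\ge h$ on $B$ while $r_y<h$ on $C$ and the masses coincide, this difference is nonnegative, so the convex reward is at least the original reward.

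Finally I would close the loop: by the centroid optimality of Definition~\ref{def: 3}, $D(\bar Q)=\sum_i\int_{\bar{z_i}}D(v_y\|\bar{\textbf{c}_i})\,d\mu\le\sum_i\int_{\bar{z_i}}D(v_y\|[c_i,1-c_i])\,d\mu$, and the latter equals the shared constant minus the convex reward, which by the previous step is $\le D(Q)$. The step I expect to be the main obstacle is recognizing that the probability-preservation constraint is exactly what makes the argument work: because $h$ is pinned by the mass condition rather than by the divergence crossover of $g_1,g_2$, a naive greedy-threshold argument fails, and it is the mass balance $\mu(B)=\mu(C)$ that annihilates the constant offset of $\ell$ and restores monotonicity. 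Secondary care is needed for atoms or ties in the distribution of $r_y$ when solving for $h$, and for confirming that the correct pairing of $\bar{z_1}$ with $c_1$ is the one that yields the positive slope.
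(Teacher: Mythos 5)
Your proposal is correct and follows essentially the same route as the paper: both construct $\bar{z_1},\bar{z_2}$ as threshold cells with matched masses, reduce the comparison to the exchanged pieces $z_1\cap\bar{z_2}$ and $z_2\cap\bar{z_1}$, observe that $D(v_y\|\textbf{c}_1)-D(v_y\|\textbf{c}_2)$ (your $g_2-g_1$, since the entropy terms cancel in the difference) is affine in $r_y$ with positive slope $\log\frac{c_2(1-c_1)}{c_1(1-c_2)}$, use the equal-mass condition together with the separation of $r_y$ values to get the key inequality, and finish with centroid optimality. Your Bregman/reward framing is only a cosmetic repackaging of the paper's direct divergence-difference computation.
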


\begin{proof}
Due to $p_{z_1}+p_{z_2}=1$, we always can find two sets $\bar{z_1}$ and $\bar{z_2}$ such that  $\bar{z_1} \leq \bar{z_2}$ and $p_{\bar{z_i}}=p_{z_i}$ for $\forall$ $i=1,2$. 
Let $A=\bar{z_1} \cap z_2$ and $B=\bar{z_2} \cap z_1$.  Obviously that $p_{A} =  p_{B}$. 
From $\bar{z_1} \leq \bar{z_2}$, we have $A \leq B$. 
%
Now, let show that for $\textbf{c}_1=[c_1,1-c_1]<\textbf{c}_2=[c_2,1-c_2]$ then $F(r_y)=D(v_y||\textbf{c}_1)-D(v_y||\textbf{c}_2)$ is a non-decreasing function in $p_{x_1|y}=r_y$. Indeed, from the Definition \ref{def: 2},
\begin{eqnarray}
D(v_y||\textbf{c}_1)\!-\!D(v_y||\textbf{c}_2) &\!=\!& r_y  \log \dfrac{c_2(1\!-\!c_1)}{c_1(1\!-\!c_2)} + \log (\dfrac{1\!-\!c_2}{1\!-\!c_1}). \nonumber \\
\label{eq: 2-a}
\end{eqnarray}

Due to $\textbf{c}_1 < \textbf{c}_2$ implies that $c_1 < c_2$, then
$F'(r_y)=\log \dfrac{c_2(1-c_1)}{c_1(1-c_2)} > 0$. Due to the mapping from $y$ to $r(y)$ is one to one mapping (the mapping from $r(y)$ to $y$, however, may not), from $p_{A} =  p_{B}$ and $A \leq B$, then
\begin{small}
\begin{equation}
\label{eq: 2}
\int_{y \in A}^{} \! [D( v_y||\textbf{c}_1) \!-\! D( v_y||\textbf{c}_2)] d \mu(y) \! \leq \! \int_{y  \in  B}^{} \! [D( v_y||\textbf{c}_1) \!-\! D(v_y||\textbf{c}_2)] d \mu(y).
\end{equation}
\end{small}

Adding to both sides of (\ref{eq: 2}) an amount of $\int_{y\in \{z_1 \cap \bar{z_1}\}}^{} D(v_y||\textbf{c}_1)d \mu(y) + \int_{y \in \{z_2 \cap \bar{z_2}\}}^{} D(v_y||\textbf{c}_2)d \mu(y) $ and rearrange, (\ref{eq: 3}) is constructed.

\begin{strip}
\vspace{-0.1 in}
\hrule
\begin{equation}
\label{eq: 3}
\int_{y \in \bar{z_1} }^{}  D(v_y||\textbf{c}_1) d \mu(y)  +  \int_{y \in  \bar{z_2} }^{} D( v_y || \textbf{c}_2) d \mu(y)   \leq  \int_{y  \in  {z_1} }^{} D( v_y  || \textbf{c}_1 ) d \mu(y)  +  \int_{y \in  {z_2} }^{} D(v_y || \textbf{c}_2 )d \mu(y). 
\end{equation}

\begin{equation}
\label{eq: 4}
 \int_{y  \in  \bar{z_1} }^{} D  (v_y|| \bar{\textbf{c}_1})d \mu(y)  +  \int_{y \in \bar{z_2} }^{} D(v_y|| \bar{\textbf{c}_2})d \mu(y) \leq \int_{y \in  \bar{z_1} }^{} D(_y||\textbf{c}_1)d \mu(y)  +  \int_{y \in  \bar{z_2} }^{} D(v_y||\textbf{c}_2)d \mu(y).  
\end{equation}

\begin{equation}
\label{eq: 5}
 \int_{ y \in \bar{z_1} }^{} D(v_y|| \bar{\textbf{c}_1})d \mu(y)  +  \int_{y  \in  \bar{z_2} }^{} D(v_y|| \bar{\textbf{c}_2})d \mu(y) \leq  \int_{y  \in  {z_1} }^{} D(v_y||\textbf{c}_1)d \mu(y)  +  \int_{y \in {z_2} }^{} D(v_y||\textbf{c}_2)d \mu(y).
\end{equation}
\hrule
\vspace{-0.1 in}
\end{strip}

Now, by using $\bar{\textbf{c}_1}$ and $\bar{\textbf{c}_2}$ are the new centroids of $\bar{z_1}$ and $\bar{z_2}$, from Definition \ref{def: 3}, (\ref{eq: 4}) is constructed.

Finally, from (\ref{eq: 3}) and (\ref{eq: 4}), (\ref{eq: 5}) is established. That said, $D(\bar{Q}) \leq D(Q)$ which is complete our proof. 
\end{proof}
\subsection{Optimal structure of multiple quantized-output quantizers}

\begin{theorem}
\label{theorem: 2}
Let $Q$ is a quantizer with arbitrary disjoint quantized-output sets $\{ z_1,z_2,\dots,z_N \}$ corresponding to $N$ centroids ${\textbf{c}_1,\textbf{c}_2,\dots,\textbf{c}_N}$ such that $\textbf{c}_i<\textbf{c}_{i+1}$ $\forall$ $i$, there exists an other \textit{convex cell quantizer} $\bar{Q}$ with the interval cells $\{\bar{z_1},\bar{z_2},\dots,\bar{z_N}\}$ and the corresponding centroids $\{\bar{\textbf{c}_1},\bar{\textbf{c}_2},\dots,\bar{\textbf{c}_N}\}$ such that  $\bar{z_i}<\bar{z_{i+1}}$, $p_{z_i}=p_{\bar{z_i}}$ $\forall$ $i$ and $D(\bar{Q}) \leq D(Q)$.
\end{theorem}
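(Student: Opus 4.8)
The plan is to reduce everything to the two--cluster result of Theorem~\ref{theorem: 1}. First I would fix the \emph{target}: since $\sum_i p_{z_i}=1$, there exist thresholds $0=h_0<h_1<\dots<h_{N-1}<h_N=1$ on the $r_y$--axis for which each interval cell $\bar{z_i}=\{y:h_{i-1}\le r_y<h_i\}$ carries mass $p_{\bar{z_i}}=p_{z_i}$; this defines a convex cell quantizer $\bar{Q}$ with $\bar{z_i}<\bar{z_{i+1}}$ automatically, so the only claim left is $D(\bar{Q})\le D(Q)$. Mirroring the $N=2$ argument, I would split this through the intermediate quantity $D'=\sum_{i=1}^{N}\int_{y\in\bar{z_i}}D(v_y||\textbf{c}_i)\,d\mu(y)$, which scores the \emph{new} cells with the \emph{old} centroids, and prove the two inequalities $D(\bar{Q})\le D'$ and $D'\le D(Q)$ separately. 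Their chain is the theorem, and it is precisely the generalization of~(\ref{eq: 3})--(\ref{eq: 5}).

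The substantive step is $D'\le D(Q)$, a rearrangement inequality with the centroids held fixed. Its only analytic input is that for every $i<j$ the function $F_{ij}(r_y)=D(v_y||\textbf{c}_i)-D(v_y||\textbf{c}_j)$ is non--decreasing in $r_y$; this is the same computation as in~(\ref{eq: 2-a}), which used only $c_i<c_j$ and hence applies to all pairs, not just neighbours. Given this monotonicity, I would argue that among all partitions with the prescribed masses the sorted interval assignment minimizes $\sum_i\int_{y\in S_i}D(v_y||\textbf{c}_i)\,d\mu(y)$. Concretely this is obtained by iterating the two--cluster exchange of Theorem~\ref{theorem: 1}: if some $z_i$ with $i<j$ contains a positive--measure piece $A$ whose $r_y$ values all exceed those of a piece $B\subseteq z_j$ with $\mu(A)=\mu(B)$, then swapping $A$ and $B$ changes the fixed--centroid cost by $\int_{y\in B}F_{ij}\,d\mu(y)-\int_{y\in A}F_{ij}\,d\mu(y)\le 0$, which is exactly inequality~(\ref{eq: 2}). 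Removing all such inversions drives the partition to $\bar{Q}$, giving $D'\le D(Q)$.

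The second inequality $D(\bar{Q})\le D'$ is immediate from Definition~\ref{def: 3}: the true centroid $\bar{\textbf{c}_i}$ globally minimizes $\int_{y\in\bar{z_i}}D(v_y||c)\,d\mu(y)$, so replacing $\textbf{c}_i$ by $\bar{\textbf{c}_i}$ on each cell only decreases the score cell by cell. Combining, $D(\bar{Q})\le D'\le D(Q)$, which is the assertion. As a byproduct, because for binary $v_y$ the centroid is the $\mu$--weighted mean of $r_y$ over a cell and $\bar{z_i}<\bar{z_{i+1}}$, the new centroids inherit the order $\bar{\textbf{c}_i}<\bar{\textbf{c}_{i+1}}$.

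I expect the main obstacle to be the global rearrangement of the second paragraph rather than either endpoint inequality. Two points need care in the continuous setting: existence of thresholds realizing the exact masses $p_{z_i}$ (clean when the pushforward of $\mu$ through the one--to--one map $y\mapsto r_y$ has no atoms, and otherwise requiring a single atom to be split across two adjacent cells), and a rigorous guarantee that the exchange process terminates at the sorted partition rather than merely never increasing the cost. The cleanest route past the termination bookkeeping is to read the monotonicity of $F_{ij}$ as a Monge (supermodularity) condition, which yields in one shot that the monotone coupling is cost--optimal; all the real work then sits in the elementary monotonicity already established for Theorem~\ref{theorem: 1}.
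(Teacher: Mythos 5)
Your proof is correct, but it takes a genuinely different route from the paper. The paper proves Theorem~\ref{theorem: 2} by induction on $N$: it singles out the cell $z_1$ of minimum probability, applies the inductive hypothesis on $\{R/z_1\}$ to sort the remaining $k$ cells, invokes Theorem~\ref{theorem: 1} on the pair $(\bar{z_1}^{(1)},\bar{z_2}^{(1)})$ to push $z_1$ to the leftmost position, and then applies the inductive hypothesis once more on the complement --- so the only analytic ingredient it ever touches is the two-cell swap, at the cost of a three-step shuffle whose bookkeeping (in particular, why $\bar{z_1}^{(2)}$ ends up leftmost and why the other cells remain intervals after Step~2) is left rather informal. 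You instead fix the target sorted partition outright, split the comparison through the fixed-centroid cost $D'$, and prove $D'\le D(Q)$ as a global rearrangement: the monotonicity of $F_{ij}$ for every pair $i<j$ (the same computation as~(\ref{eq: 2-a}), needing only $c_i<c_j$) is precisely the Monge/supermodularity condition, so the monotone coupling between $\mu$ and the prescribed masses is cost-optimal in one shot; the final step $D(\bar{Q})\le D'$ is the centroid optimality of Definition~\ref{def: 3}. Your version buys a cleaner and more rigorous argument --- it sidesteps both the induction and the termination question for the exchange process, and it makes explicit the two regularity points the paper glosses over (realizability of the exact masses by thresholds, and the ordering of the new centroids via the fact that the KL centroid of binary $v_y$ is the $\mu$-weighted mean of $r_y$). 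What the paper's version buys is that it never needs anything beyond Theorem~\ref{theorem: 1} as a black box, whereas you need the all-pairs monotonicity of $F_{ij}$ and the optimal-transport fact; both are elementary here, so your route is arguably preferable.
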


\begin{proof}
The proof is constructed by using the induction method.  From Theorem \ref{theorem: 1}, Theorem \ref{theorem: 2} holds for $N=2$. Suppose that it also holds for $N=k$. Consider a quantizer $Q$ with arbitrary disjoint quantized-output sets $\{ z_1,z_2,\dots,z_{k+1} \}$, we show that there exists a \textit{convex cell quantizer} $\bar{Q}$ having the interval cells $\{\bar{z_1},\bar{z_2}, \dots, \bar{z_{k+1}}\}$ such that $\bar{z_i} < \bar{z_{i+1}}$, $p_{z_i}=p_{\bar{z_i}}$ $\forall$ $i=\{1,2,\dots,k+1\}$ and $D(Q) \leq D(\bar{Q})$.  
Now, without the loss of generality we suppose that
\begin{equation}
p_{z_1} =\min_{i} p_{z_i}, \text{  } \forall i.
\end{equation}
Next, we are ready to show that existing a \textit{convex cell quantizer} having the same quantized-output but the total distortion is strictly less than or equal. 
\begin{enumerate}
\item \textbf{Step 1:} Consider a \textit{convex cell quantizers} over the set $\{R / z_1 \}$. Using the assumption that the Theorem \ref{theorem: 2} holds for $N=k$, there exists a quantizer $Q^{(1)}$ which generates $\{ \bar{z_1}^{(1)},\bar{z_2}^{(1)}, \dots, \bar{z_{k+1}}^{(1)} \}$ where $\bar{z_1}^{(1)} \equiv z_1$ such that $D(Q^{(1)}) \leq D(Q)$, $p_{z_i} = p_{\bar{z_i}^{(1)}}$, $\bar{z_i}^{(1)}< \bar{z_{i+1}}^{(1)}$, $\forall$ $i \geq 2$ and $\bar{z_i}^{(1)}$  is an interval in $\{R / z_1 \}$, $\forall$ $i \geq 2$. 
 
\item \textbf{Step 2:}  Using Theorem \ref{theorem: 1} for only $\bar{z_1}^{(1)}$ and $\bar{z_2}^{(1)}$ and noting that $p_{\bar{z_1}^{(1)}}=p_{z_1} =\min_{i} (p_{z_i}) \leq p_{z_2}=p_{\bar{z_2}^{(1)}}$, existing a \textit{convex cell quantizer} $Q^{(2)}$ that generates $\{ \bar{z_1}^{(2)},\bar{z_2}^{(2)}, \dots, \bar{z_{k+1}}^{(2)} \}$ where $\bar{z_i}^{(2)} \equiv \bar{z_i}^{(1)}$, $\forall$ $i \geq 3$ and $\bar{z_1}^{(2)}$ should be the leftmost interval. That said, $\bar{z_1}^{(2)} \leq \bar{z_i}^{(2)} $, $\forall$ $i \leq 2$ and  $D(Q^{(2)}) \leq D(Q^{(1)})$. 

\item \textbf{Step 3:}  Using Theorem 2 for $N=k$ one more time over $\{ R / \bar{z_1}^{(2)} \}$, existing a convex cell quantizer $Q^{(3)}$ that generates $\{ \bar{z_1}^{(3)},\bar{z_2}^{(3)}, \dots, \bar{z_{k+1}}^{(3)} \}$ where $\bar{z_1}^{(3)} \equiv \bar{z_1}^{(2)} $ such that $D(Q^{(3)}) \leq D(Q^{(2)})$, $\bar{z_i^{(3)}}< \bar{z_{i+1}^{(3)}}$ $\forall$ $i \geq 2$. Since $\bar{z_1}^{(3)}=\bar{z_1}^{(2)}$ is the leftmost interval, $\{ \bar{z_1}^{(3)},\bar{z_2}^{(3)}, \dots, \bar{z_{k+1}}^{(3)} \}$ contains exactly $k+1$ continuous intervals such that  $\bar{z_i^{(3)}}< \bar{z_{i+1}^{(3)}}$, $\forall$ $i$. 
\end{enumerate}
Obviously that by using the \textit{convex cell quantizer} $\bar{Q}=Q^{(3)}$, the proof is complete.  
\end{proof}

\section{Discussions}
 \label{sec: algorithm}
 \subsection{Finding globally optimal quantizer using dynamic programming }
From the convex cells property of the optimal quantizer, finding the optimal quantizer is equivalent to finding $N+1$ scalar thresholds
$$h_0=0<h_1< \dots< h_{N-1}<h_N=1$$ as the boundaries such that $$Q(y) = z_i, \text{ if } h_{i-1} \leq p_{x_1|y} <h_i.$$  
Now, if the constraint of quantized-output has the following structure 
\begin{equation}
\label{eq: 100}
C(p_Z)=g_1(p_{Z_1}) + g_2(p_{Z_2})+\dots+ g(p_{Z_N}), 
\end{equation}
where $g_i(.)$ can be an arbitrary function, then the problem of finding globally optimal quantizer can be cast as a 1-dimensional scalar quantization problem that can be solved efficiently using the famous dynamic programming \cite{kurkoski2014quantization}, \cite{wang2011ckmeans}. We note that the condition in (\ref{eq: 100}) is not too restricted. In fact, many well-known constraints such
as entropy satisfy this structure.

\subsection{Binary input binary quantized-output channels and optimal single threshold quantizer}

In this section, we consider the binary input binary quantized-output channels. Due to $N=2$, from the result in Theorem \ref{theorem: 1}, the optimal quantizer can be found by searching an optimal scalar threshold $0 <a^* < 1$ such that 
$$
\begin{cases}
Q(y)=z_1 \text{   if } r_y \leq a^*,\\
Q(y)=z_2 \text{   if } r_y > a^*.
\end{cases}
$$
Thus, the optimal quantizer can be found by an exhausted searching  over a new random variable  $0 <a < 1$. The complexity of this algorithm is $O(M)$ where $M=\dfrac{1}{\epsilon}$ and $\epsilon$ is a small number denotes the precise of the solution. From the optimal value $a^*$, the corresponding thresholds $y \in Y$ can be constructed using all the solutions of $r_y=a^*$. Interestingly, the following Lemma shows a sufficient condition where a single threshold $y \in Y= \mathbf{R}$ is an optimal quantizer.  
\begin{lemma}
\label{lemma: 1}
If $\dfrac{\phi_2(y)}{\phi_1(y)}$ is a strictly increasing/decreasing function, a single threshold quantizer is optimal.
\end{lemma}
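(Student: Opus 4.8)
The plan is to reduce the claim to the monotonicity of the map $y \mapsto r_y$ and then invoke Theorem~\ref{theorem: 1}. Since the channel is binary-output ($N=2$), Theorem~\ref{theorem: 1} already guarantees that some optimal quantizer is a \textit{convex cell quantizer} in the posterior variable $r_y$; that is, there is a single scalar threshold $a^* \in (0,1)$ with $Q(y)=z_1$ when $r_y \le a^*$ and $Q(y)=z_2$ when $r_y > a^*$. The point is that a single threshold on $r_y$ need not correspond to a single threshold on $y$, because the preimage $\{y : r_y = a^*\}$ may contain several points (as already noted after equation~(\ref{eq: 2-a})); the role of the hypothesis is exactly to rule this out.

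First I would rewrite $r_y$ as a function of the likelihood ratio. Dividing numerator and denominator of $r_y$ by $p_1\phi_1(y)$ gives
\begin{equation}
r_y=\frac{1}{1+\frac{p_2}{p_1}\,t(y)},\qquad t(y)=\frac{\phi_2(y)}{\phi_1(y)}.
\end{equation}
Because $p_1,p_2>0$, the map $t\mapsto 1/(1+\tfrac{p_2}{p_1}t)$ is strictly decreasing on $t>0$, so $r_y$ is a strictly decreasing function of $t(y)$, independent of $y$.

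Next I would use the hypothesis. If $t(y)$ is strictly increasing (resp.\ decreasing) in $y$, then composing with the strictly decreasing map above makes $r_y$ a strictly decreasing (resp.\ increasing), hence strictly monotone and one-to-one, function of $y$. Consequently the sublevel set $\{y : r_y \le a^*\}$ is a single half-line: $[y^*,\infty)$ in the decreasing case and $(-\infty,y^*]$ in the increasing case, where $y^*$ is the unique root of $r_y=a^*$, equivalently of $t(y)=\tfrac{p_1}{p_2}\tfrac{1-a^*}{a^*}$. Thus the optimal convex cell quantizer in $r_y$ is realized on $Y=\mathbf{R}$ by the single threshold $y^*$, which is precisely the claim.

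I expect the only delicate point to be making the equivalence between ``convex cell in $r_y$'' and ``single threshold in $y$'' fully rigorous: one must argue that strict monotonicity of $t(y)$ yields a genuine bijection between $y$ and $r_y$, so that $r_y=a^*$ has the unique solution $y^*$ and the interval structure of the cells is preserved under this bijection. Everything else is the short algebraic reduction displayed above.
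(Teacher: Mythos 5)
Your proof is correct and follows essentially the same route as the paper: express $r_y$ as a monotone function of the likelihood ratio $\phi_2(y)/\phi_1(y)$, conclude that $y\mapsto r_y$ is strictly monotone and hence injective under the hypothesis, and then transfer the single-threshold structure in $r_y$ (from Theorem~\ref{theorem: 1}) to a single threshold in $y$. You are in fact slightly more careful than the paper, which drops the constant factor $p_2/p_1$ in the identity $r_y = 1/\bigl(1+\tfrac{p_2}{p_1}\tfrac{\phi_2(y)}{\phi_1(y)}\bigr)$; this does not affect the argument since the factor is a positive constant.
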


\begin{proof}
We consider
\begin{equation*}
r_y=p_{x_1|y}=\dfrac{p_1 \phi_1(y)}{p_1 \phi_1(y) + p_2 \phi_2(y)}=\dfrac{1}{1+\dfrac{\phi_2(y)}{\phi_1(y)}}.
\end{equation*}
Since $\dfrac{\phi_2(y)}{\phi_1(y)}$ is a strictly increasing/decreasing function, $r_y$ is a strictly increasing/decreasing function. Thus, for a given value of $a$, existing a single value of $y$ such that $r_y=a$. Therefore, the optimal $a^*$ corresponds to a single value of $y^*$. Thus, a single threshold quantizer is optimal in this context. Our result is an extension of Lemma 2 in \cite{kurkoski2017single}. 
\end{proof}

\section{Conclusion}
 \label{sec: conclusion}
The optimal quantizer's structure for binary discrete input continuous output channels under quantized-output constraints are explored. 
Based on the optimal structure, we proposed a polynomial time complexity algorithm that can find the globally optimal quantizer. 

\
\bibliographystyle{unsrt}
\bibliography{sample}

\end{document}